\setlist[description]{leftmargin=*}
\newtheorem{thm}{Theorem}[section]
\newtheorem{lem}[thm]{Lemma}
\theoremstyle{definition}
\theoremstyle{remark}
\newcommand{\norm}[1]{\left\Vert#1\right\Vert}
\newcommand{\set}[1]{\left\{#1\right\}}
\newcommand{\RR}{\mathbb{R}}
\newcommand{\Lor}{\mathbb{R}^{1,d}}
\renewcommand{\aa}{\bm{a}}
\newcommand{\bb}{\bm{b}}
\newcommand{\ee}{\bm{e}}
\newcommand{\xx}{\bm{x}}
\newcommand{\yy}{\bm{y}}
\newcommand{\zz}{\bm{z}}
\newcommand{\qq}{\bm{q}}
\newcommand{\uu}{\bm{u}}
\newcommand{\oo}{\bm{o}}
\newcommand{\cB}{\mathcal{B}}
\newcommand{\cH}{\mathcal{H}}
\newcommand{\Rplus}{\mathbb{R}_{\geqslant 0}}
\newcommand{\stress}{\operatorname{Stress}}
\newcommand{\strain}{\operatorname{Strain}}
\newcommand{\dist}{\mathrm{d}}
\DeclareMathOperator{\arcosh}{arcosh}
\DeclareMathOperator{\diag}{diag}
\begin{document}

\title[Hydra: strain-minimizing hyperbolic embedding]{Hydra: A method for strain-minimizing hyperbolic embedding of network- and distance-based data}
\author{Martin Keller-Ressel}
\author{Stephanie Nargang}
\address{Institute of Mathematical Stochastics, Department of Mathematics, TU Dresden, Germany}
\date{\today}
\thanks{The authors would like to thank Carlo Vittorio Cannistraci and Alessandro Muscoloni for comments and discussions.}
\maketitle

\begin{abstract}
We introduce hydra (hyperbolic distance recovery and approximation), a new method for embedding network- or distance-based data into hyperbolic space. We show mathematically that hydra satisfies a certain optimality guarantee: It minimizes the `hyperbolic strain' between original and embedded data points. Moreover, it recovers points exactly, when they are located on a hyperbolic submanifold of the feature space. Testing on real network data we show that the embedding quality of hydra is competitive with existing hyperbolic embedding methods, but achieved at substantially shorter computation time. An extended method, termed hydra+, outperforms existing methods in both computation time and embedding quality.
\end{abstract}

\section{Introduction}

Embeddings of networks and distance-based data into hyperbolic geometry have received substantial interest in recent years. Such embeddings have been used for visualization \cite{walter2004h}, link prediction \cite{papadopoulos2012popularity, papadopoulos2015network} and community detection \cite{papadopoulos2015network, muscoloni2017machine}. They offer insight into the tradeoff between popularity and similarity effects in network growth \cite{papadopoulos2012popularity} and have interesting implications for routing,  network navigability \cite{kleinberg2007geographic, boguna2009navigability} and efficient computation of shortest network paths \cite{zhao2011fast, chowdhary2017improved}. Moreover, such embedding methods can be seen as alternatives to classic visualization and dimensionality reduction techniques based on Euclidean geometry, such as principal component analysis or multidimensional scaling.
However, the hyperbolic embedding methods as yet proposed in the literature have either been based on specific assumptions about network growth (e.g. \cite{papadopoulos2015network, muscoloni2017machine}), or methods with strong theoretical properties, but requiring costly non-linear numerical optimization procedures (e.g. \texttt{H-MDS} of \cite{walter2004h}, \texttt{Rigel} of \cite{zhao2011fast} and \texttt{HyPy} of \cite{chowdhary2017improved}). Here, we introduce \texttt{hydra} (hyperbolic distance recovery and approximation), a novel method for embedding network or distance-based data into hyperbolic space, which has strong mathematical foundations and does not depend on specific assumptions on network growth or structure. At the same time, the method is computationally efficient and based on reduced matrix Eigendecomposition. We show mathematically, that when presented with mutual distances of data points located on a low-dimensional hyperbolic submanifold of the feature space, \texttt{hydra} will recover these points exactly. For general data, the method satisfies a certain optimality property, similar to the strain-minimizing property of multidimensional scaling. Finally, we introduce \texttt{hydra+}, an extension where the result of \texttt{hydra} is used as inital condition for hyperbolic embedding methods based on optimization, such as \texttt{Rigel/HyPy}, substantially improving their efficiency. When tested on real network data, \texttt{hydra} and its variants typically outperform existing hyperbolic embedding methods. All new methods introduced are available in the package \texttt{hydra} \cite{keller-ressel2019hydra} for the statistical computing environment \texttt{R} \cite{rct2016r}.

\section{Embeddings into Hyperbolic Space}

\subsection{Hyperbolic Space}We summarize the key features of the hyperboloid model of hyperbolic geometry (cf. \cite{ratcliffe2006foundations, cannon1997hyperbolic}) in dimension $d$. This will provide the mathematical framework in which we formulate our embedding method. To start, we define for $\xx, \yy \in \RR^{d+1}$ the indefinite inner product\begin{equation}\label{eq:lorentz}
\xx \circ \yy := x_1 y_1 - \left(x_2 y_2 + \dotsc + x_{d+1} y_{d+1}\right),
\end{equation}
also called \emph{Lorentz product}. The real vector space $\RR^{d+1}$ equipped with this inner product is called \emph{Lorentz space} and denoted by $\Lor$. As nested subsets, it contains the \emph{positive Lorentz space} $\Lor_+ = \set{\xx \in \Lor: x_1 >0}$ and the single-sheet hyperboloid
\begin{equation}\label{eq:Hd}
\cH_d = \set{\xx \in \Lor: \xx \circ \xx = 1, x_1 > 0}.
\end{equation}
The \emph{hyperboloid model} with curvature $-\kappa$, $(\kappa >0)$, consists of $\cH_d$ endowed with the hyperbolic distance
\begin{equation}\label{eq:hyper_dist}
\dist^\kappa_H(\xx,\yy) = \frac{1}{\sqrt{\kappa}}\arcosh\left(\xx \circ \yy\right), \qquad \xx, \yy \in \cH_d.
\end{equation}
The hyperbolic distance $\dist^\kappa_H$ is a distance on $\cH_d$ in the usual mathematical sense; in particular it takes only positive values and satisfies the triangle inequality, cf. \cite[\S3.2]{ratcliffe2006foundations}. In fact, it can be shown that $\cH_d$ endowed with the Riemannian metric tensor
\[ds^2 = \frac{1}{\kappa} \left(d\xx \circ d\xx \right)\]
is a Riemannian manifold and $\dist^\kappa_H(\xx, \yy)$ is the corresponding Riemannian distance.\footnote{That is, $\dist^\kappa_H(\xx,\yy)$ is the length of the shortest path from $\xx$ to $\yy$, where lengths are measured using the length element $ds = \sqrt{ds^2}$.} The sectional curvature of this manifold is constant and equal to $-\kappa$, which explains the role of $\kappa$ as curvature parameter. Just as Euclidean space is the canonical model of geometry with zero curvature, hyperbolic space is the canonical model of geometry with negative curvature.

\subsection{The Poincar\'e ball Model}In addition to the hyperboloid model, we introduce the \emph{Poincar\'e ball model} of hyperbolic geometry, which is more appealing for visualizations of hyperbolic space and hyperbolic embeddings. In $\RR^d$, consider the open unit ball 
\[\cB_d := \set{\zz \in \RR^d: |\zz| < 1},\]
where $|\zz| = \sqrt{z_1^2 + \dotsm + z_d^2}$ is the usual Euclidean norm. The \emph{stereographic projection} $\xi: \Lor \to \RR^d$ defined by 
\begin{equation}\label{eq:stereo}
\xi(\xx) = \left(\frac{x_2}{1 + x_1}, \dotsc, \frac{x_{d+1}}{1 + x_1}\right)
\end{equation}
restricts to a bijective mapping from the hyperboloid $\cH_d$ onto $\cB_d$ (cf. \cite[\S4.2]{ratcliffe2006foundations}). It transfers the hyperbolic distance from $\cH_d$ to $\cB_d$, by setting
\[\dist^\kappa_B(\xi(\zz_1), \xi(\zz_2)) = \dist^\kappa_H(\zz_1,\zz_2), \qquad \zz_1, \zz_2 \in \cB_d.\]
Endowed with this distance, $(\cB_d, \dist^\kappa_B)$ is isometric to $(\cH_d, \dist^\kappa_H)$ and therefore an equivalent model of hyperbolic geometry.\\
It will be convenient to parameterize $\cB_d$ by the radial coordinate $r \in [0,1)$ and the directional coordinate $\uu$ (a unit vector in $\RR^d$), given by
\[r := \sqrt{z_1^2 + \dotsm + z_d^2}, \qquad \uu := \frac{\zz}{r}.\]
An easy calculation shows that the conversion from coordinates in $\cH_d$ is given by
\begin{align}\label{eq:conversion_r}
r &= \xi_r(x_1) := \sqrt{\frac{x_1 -1}{x_1 + 1}} \qquad \text{and}\\
\uu &= \xi_{\uu}(x_2, \dotsc, x_{d+1}) := \left(x_2, \dotsc, x_{d+1}\right) / \sqrt{x_2^2 + \dotsm + x_{d+1}^2}. \label{eq:conversion_u}
\end{align}

In dimension $d = 2$, the Poincar\'e ball becomes the \emph{Poincar\'e disc}, and each of its points can be described by the radius $r$ and the unique angle $\theta \in [0,2\pi)$ such that 
\[z_1 = r \cos \theta, \qquad z_2 = r \sin \theta.\]

\subsection{Embedding of Distances and Networks}
To formulate the embedding problem, let a symmetric matrix $D = [d_{ij}] \in \Rplus^{n \times n}$ with zero diagonal be given, which represents the pairwise dissimilarities between some objects $\oo_1, \dotsc, \oo_n$. The basic premise of hyperbolic embedding is that the matrix $D$ can be approximated by a hyperbolic distance matrix $H = [\dist^\kappa_H(\xx_i, \xx_j)]$, i.e., that we can find points $\xx_1, \dotsc, \xx_n$ in low-dimensional hyperbolic space $\cH_d$, such that 
\begin{equation}\label{eq:embed}
\dist^\kappa_H(\xx_i, \xx_j) \approx d_{ij}.
\end{equation}
The points $\xx_1, \dotsc, \xx_n$ give a low-dimensional representation in hyperbolic space of the configuration of $\oo_1, \dotsc, \oo_n$ induced by their dissimilarities. In Euclidean space, such approximations are well studied and can be calculated e.g. by multidimensional scaling (MDS), see also Section~\ref{app:mds} and \cite{borg2005modern}.\\
An important special case is the \emph{network embedding problem}, where a (unweighted, undirected) graph $G = (V,E)$ is given and $D = [d_{ij}]$ is the graph distance matrix of $G$, i.e., $d_{ij}$ is the length of the shortest path in $G$ from vertex $v_i$ to $v_j$. In particular for graphs with \emph{locally tree-like structure} it can be expected that hyperbolic geometry gives a better representation than Euclidean geometry, see e.g \cite{kleinberg2007geographic}. Instead of the shortest-path distance, other dissimilarity measures based on the structure of $G$ can be used, such as the repulsion-attraction (RA) rule or edge-betweenness-centrality (EBC), cf. \cite{muscoloni2017machine}. 

\subsection{Connection to prior work and innovations}
Most existing methods for hyperbolic embedding can be placed into one of two classes: Stress-based methods or network-specific methods. 
 \begin{itemize}[leftmargin=*]

 \item \textbf{Stress-based methods} aim to solve the embedding problem \eqref{eq:embed} by minimizing the \emph{stress functional}
 \begin{equation}\label{eq:stress}
 \stress(\xx_1, \dotsc, \xx_n)^2 := \sum_{i,j=1}^n \left(d_{ij} - \dist^\kappa_H(\xx_i, \xx_j)\right)^2
 \end{equation}
 over all $\xx_1, \dotsc, \xx_n \in \cH_d$. This minimization problem is a challenging high-dimensional non-convex optimization problem, and methods largely differ in their numerical approach to minimize \eqref{eq:stress}. The \texttt{H-MDS} method proposed in \cite{walter2004h} is a gradient descent scheme for minizing \eqref{eq:stress} based on explicit calculation of the gradient. \cite{chamberlain2017neural} propose a neural-network-based approach to minimizing \eqref{eq:stress}, while \cite{zhao2011fast} and \cite{chowdhary2017improved} develop so-called `landmark-based' minimization algorithms (\texttt{Rigel} and \texttt{HyPy} respectively) based on iterative quasi-Newton minimization. Due to the `landmark' heuristic, these methods are able to deal with large-scale instances of \eqref{eq:stress} and do not require full knowledge of $D$, see \cite{chowdhary2017improved} for details.

 \item \textbf{Model-based methods} focus on the network embedding problem and rely on underlying assumptions on the generating mechanism of the graph $G$, see e.g. \cite{papadopoulos2012popularity} for a model of `hyperbolic network growth'. In \texttt{HyperMap} of \cite{papadopoulos2015network} and the \emph{coalescent embedding} of \cite{muscoloni2017machine}, the radial coordinate $r_i$ of the embedded points in the Poincar\'e ball model is determined directly from the degree of the vertices $v_i$, using the assumption of a power-law relationship. The directional component $\uu_i$ of the embedding is then determined by maximizing likelihood in an underlying probabilistic model (cf. \cite{papadopoulos2015network}) or by applying existing nonlinear dimensionality reduction methods (such as Laplacian Eigenmapping or \texttt{ISOMAP}) to the underlying data (cf. \cite{muscoloni2017machine}). 
 \end{itemize}
Here, our main innovation is to replace the stress functional \eqref{eq:stress} by the \emph{strain functional}
\begin{equation}\label{eq:strain}
 \strain(\xx_1, \dotsc, \xx_n)^2 := \sum_{i,j=1}^n \left(\cosh(\sqrt{\kappa}\,d_{ij}) - \xx_i \circ \xx_j\right)^2,
\end{equation}
which results from \eqref{eq:stress} when all distances are transformed by hyperbolic cosine.
Furthermore, we introduce a highly efficient method for the minimization of hyperbolic strain, called \texttt{hydra} (hyperbolic distance recovery and approximation). Contrary to stress-minimization, \texttt{hydra} is based on matrix Eigendecomposition, similar to principal component analysis or classic multidimensional scaling.\footnote{In fact, the relation between hyperbolic strain- and stress-minimization is similar to the relation between `classic' and `metric' multidimensional scaling in the Euclidean case, cf. \cite {borg2005modern}.} In Theorems \ref{thm:exact} and \ref{thm:optimal} we show that \texttt{hydra} satisfies important theoretical optimality properties, in particular, it returns a guaranteed global minimum of \eqref{eq:strain}.
For instances based on real data, the embedding results of \texttt{hydra} are comparable to those based on pure stress-minimization, even when embedding quality is measured by the stress functional \eqref{eq:stress}; see Section~\ref{sec:real} below.  This shows, that even when minimization of stress is the final goal, the strain functional \eqref{eq:strain} is a valuable and useful proxy for stress, as it can be minimized in a highly efficient way. The best results in terms of stress are obtained when strain- and stress-minimization are combined. This is the basis of the \texttt{hydra+} method, introduced in Section~\ref{sec:practical}, where the embedding result of \texttt{hydra} is used as initial condition for a stress-minimization run.
  
 \section{A new hyperbolic embedding method}
\subsection{The hydra algorithm}
We introduce the \texttt{hydra} algorithm, displayed as Algorithm~\ref{algo:hydra}, which calculates an embedding into the Poincar\'e ball model of hyperbolic space by efficiently solving the strain-minimization problem
\begin{equation}\label{eq:strain_min}
\min_{\xx_i \in \Lor} \sum_{i,j} (\cosh(\sqrt{\kappa}\,d_{ij}) - (\xx_i \circ \xx_j))^2.
\end{equation}
The algorithm proceeds as follows: 
\begin{itemize}
\item In steps A1 and A2 the strain-minimization problem is solved by means of a matrix Eigendecomposition. These steps return a coordinate matrix $X = [x_{ij}]$, whose rows $\xx_1, \dotsc, \xx_n$ are elements of positive Lorentz space $\Lor_+$ and the optimizers of \eqref{eq:strain_min}. The optimality of $\xx_1, \dotsc, \xx_n$ is the subject of Theorem~\ref{thm:optimal} below.
\item In steps B1 and B2, the points $\xx_1, \dotsc, \xx_n$ are projected onto the Poincar\'e ball $\cB_d$ using the stereographic projection \eqref{eq:stereo} and converted to radial/directional coordinates $(\uu_i)$ using \eqref{eq:conversion_r} and \eqref{eq:conversion_u}. No adjustment is necessary for the directional coordinates, which are computed in step B1. 
\item Due to \eqref{eq:conversion_r}, the radial coordinates $(r_i)$ depend only on the first column $(x_{11}, \dotsc, x_{n1})$ of $X$ and can be obtained by applying $\xi_r$ elementwise. But  $\xi_r(x_{1i})$ may be undefined for elements with $x_{1i} \in (0,1)$.\footnote{By Theorem~\ref{thm:optimal}, steps A1 and A2 guarantee that all $x_{1i}$ are positive, but not that they are larger than one.} Therefore, $r_i$ is calculated in step B2 as 
\[r_i = \xi_r\left(\frac{x_{i1}}{x_{\min}}\right),\]
that is, after rescaling the first column of $X$ by dividing by its smallest element $x_{\min}$.
\end{itemize}
 
\begin{algorithm}[!htb]
\caption{\texttt{hydra}(D,d,$\kappa$)}
\label{algo:hydra}
\begin{description}
\item[Input] 
\begin{itemize}
\item A symmetric matrix $D = [d_{ij}] \in \Rplus^{n \times n}$ with zero diagonal
\item Embedding dimension $d \le n-1$
\item Hyperbolic curvature parameter $\kappa > 0$\qquad(actual curvature: $-\kappa$)
\end{itemize} 
\item[Step A1] Set 
\begin{equation}\label{eq:Acosh}
A = [a_{ij}] := [\cosh(\sqrt{\kappa} \, d_{ij})]
\end{equation}
 and compute the Eigendecomposition 
\begin{equation}\label{eq:Eigen}
A = Q \Lambda Q^\top,
\end{equation}
where $\Lambda$ is the diagonal matrix of the Eigenvalues $\lambda_1 \ge \dotsm \ge \lambda_n$ and the columns of $Q$ are the Eigenvectors $\qq_1, \dotsc, \qq_n$.
\item[Step A2] Allocate the $n \times (d+1)$-matrix 
\begin{equation}\label{eq:X_hydra}
X := \left[\sqrt{\lambda_1}\,\qq_1 \quad \sqrt{(-\lambda_{n-d+1})^+}\,\qq_{n-d+1} \quad \dotsm \quad \sqrt{(-\lambda_{n})^+}\,\qq_{n} \right],
\end{equation}
where $x^+$ denotes the positive part $x^+ = \max(x,0)$.
\item[Step B1] (`Directional projection') For $i \in 1, \dots, n$ set
\[\uu_i := \frac{(x_{i2}, \dotsc, x_{i(d+1)})}{\sqrt{x_{i2}^2 + \dotsm + x_{i(d+1)}^2}},\]
with $x_{ij}$ the elements of $X$.
\item[Step B2] (`Radial projection') For $i \in 1, \dotsc, n$ set 
\begin{subequations}\label{eq:radial}
\begin{align*}
x_{\min} &:= \min(1, x_{11}, \dotsc, x_{n1})\\
\intertext{and} 
r_i &:= \sqrt{\frac{x_{i1} - x_{\min}}{x_{i1} + x_{\min}}} 
\end{align*}
\end{subequations}
\item[Return] Matrix $X$ and embedding $(r_i, \uu_i)_{i = 1, \dotsc, n}$ as radial and directional coordinates in the Poincar\'e ball $\cB_d$. 
\end{description}
\end{algorithm}

The key theoretical properties of the \texttt{hydra} algorithm are summarized in the following theorems, whose proofs are given in Appendix~\ref{app}. The first theorem shows that $\texttt{hydra}$ recovers any configuration of points in $d$-dimensional hyperbolic space up to isometry:

\begin{thm}[Exact Recovery] \label{thm:exact}
Let $\aa_1, \dotsc, \aa_n$ be points in hyperbolic $d$-space $\cH_d$, and let $D = [d_{ij}] = [\dist^\kappa_H(\aa_i,\aa_j)]$ be the matrix of their hyperbolic distances with curvature $-\kappa$. Then $\texttt{hydra}(D,d,\kappa)$ recovers the points $\aa_1, \dotsc, \aa_n$ up to isometry. In particular, the rows $\xx_1, \dotsc, \xx_n$ of the matrix $X$ and the points $(r_1, \uu_1), \dotsc, (r_n, \uu_n)$ returned by $\texttt{hydra}(D,d,\kappa)$ satisfy 
\[\dist_B^\kappa\Big((r_i, \uu_i),(r_j, \uu_j)\Big) = \dist_H(\xx_i, \xx_j) = d_{ij}, \qquad i,j = 1, \dotsc, n.\]
\end{thm}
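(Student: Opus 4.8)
The plan is to reduce the whole statement to the single identity $\cosh(\sqrt\kappa\,d_{ij}) = \aa_i\circ\aa_j$, which is immediate from the definition \eqref{eq:hyper_dist} of $\dist_H^\kappa$ (and legitimate since $\aa_i\circ\aa_j\ge 1$ for $\aa_i,\aa_j\in\cH_d$). Write $J=\diag(1,-1,\dots,-1)\in\RR^{(d+1)\times(d+1)}$ for the matrix representing the Lorentz form \eqref{eq:lorentz}, so that $\xx\circ\yy=\xx^\top J\yy$, and let $G$ be the $n\times(d+1)$ matrix whose rows are $\aa_1^\top,\dots,\aa_n^\top$. The identity above then says precisely that the matrix $A$ formed in Step A1 factors as $A = GJG^\top$. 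From this factorisation I would read off the spectrum of $A$: by Sylvester's law of inertia $A$ has at most one positive and at most $d$ negative eigenvalues (because $J$ has signature $(1,d)$ and $G$ has only $d+1$ columns), and since $A$ is entrywise positive, Perron--Frobenius guarantees that the positive eigenvalue $\lambda_1>0$ is actually present. Hence, with $\lambda_1\ge\dots\ge\lambda_n$ as in Step A1, we have $\lambda_1>0$, $\lambda_{n-d+1},\dots,\lambda_n\le 0$, and $\lambda_k=0$ for all remaining indices.

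The first substantive step is then the identity $X J X^\top = A$. Expanding $X$ from \eqref{eq:X_hydra} and using $-\big((-\lambda_k)^+\big)=\lambda_k$ for $\lambda_k\le 0$ (and $\lambda_k=0$ for the indices not appearing in \eqref{eq:X_hydra}), $XJX^\top$ collapses to the spectral sum $\sum_{k=1}^n\lambda_k\,\qq_k\qq_k^\top=A$. Consequently $\xx_i\circ\xx_j = a_{ij}=\cosh(\sqrt\kappa\,d_{ij})$ for all $i,j$, and taking $i=j$ (so $d_{ii}=0$) gives $\xx_i\circ\xx_i=1$. To see that the $\xx_i$ lie on $\cH_d$ and not on the lower sheet, I would again use Perron--Frobenius: the leading eigenvector $\qq_1$ may be chosen strictly positive, so $x_{i1}=\sqrt{\lambda_1}\,(\qq_1)_i>0$ and thus $\xx_i\in\cH_d$; together with $\xx_i\circ\xx_i=1$ this forces $x_{i1}=\sqrt{1+x_{i2}^2+\dots+x_{i(d+1)}^2}\ge 1$, whence $x_{\min}=1$ in Step B2 and the rescaling there is trivial. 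Now $\dist_H^\kappa(\xx_i,\xx_j)=\tfrac1{\sqrt\kappa}\arcosh(\xx_i\circ\xx_j)=d_{ij}$, and Steps B1--B2 compute exactly the radial/directional coordinates of the stereographic projection $\xi(\xx_i)$ according to \eqref{eq:conversion_r}--\eqref{eq:conversion_u}; by the defining property of $\dist_B^\kappa$ this gives $\dist_B^\kappa\big((r_i,\uu_i),(r_j,\uu_j)\big)=\dist_H^\kappa(\xx_i,\xx_j)=d_{ij}$, which is the displayed chain of equalities.

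It remains to upgrade ``distances preserved'' to ``configuration recovered up to isometry''. Here I would invoke the standard fact that the isometries of $(\cH_d,\dist_H^\kappa)$ are exactly the maps $\xx\mapsto L\xx$ with $L$ in the orthochronous Lorentz group, and that a point configuration in $\cH_d$ is determined, up to such an isometry, by its matrix of mutual Lorentz products. Concretely: from $XJX^\top=GJG^\top$, together with the observation that $X$ and $G$ have the same column space $\operatorname{col}(A)$ in $\RR^n$, one extracts $X=GL^\top$ for some $L$ with $L^\top J L=J$; since $L$ carries the upper-sheet points $\aa_i$ to the upper-sheet points $\xx_i$, it is (or extends to) an orthochronous Lorentz transformation, i.e.\ an isometry $\phi$ of $\cH_d$ with $\phi(\aa_i)=\xx_i$.

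I expect the signature count and the identity $XJX^\top=A$ to be essentially mechanical. The two points that genuinely need care are (i) the sheet normalisation: it is exactly here that Perron--Frobenius and the particular shape of the radial step B2 (the role of $x_{\min}$) are used, and without positivity of $\qq_1$ the algorithm could in principle return points on the wrong sheet; and (ii) the rank-deficient case, in which the $\aa_i$ span fewer than $d+1$ dimensions and some of the ``$(-\lambda_k)^+$''-columns in \eqref{eq:X_hydra} vanish identically --- I would need to check that this spoils neither the identity $XJX^\top=A$ nor the column-space argument in the previous paragraph.
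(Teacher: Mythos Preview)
Your proof is correct and follows the same core approach as the paper: both use the signature count (Sylvester plus Perron--Frobenius, packaged in the paper as Lemma~\ref{lem:gram}) to conclude that $A$ has exactly one positive and at most $d$ negative eigenvalues, and then verify that the hydra output satisfies $XJX^\top=A$, i.e.\ zero strain. The only structural difference is that the paper obtains this last identity by invoking the Frobenius-norm computation \eqref{eq:x_minimizes} from the proof of Theorem~\ref{thm:optimal}, whereas you expand $XJX^\top$ directly from the spectral sum; these are equivalent. Your write-up is in fact more complete than the paper's: you explicitly handle the Poincar\'e-ball equality via \eqref{eq:conversion_r}--\eqref{eq:conversion_u}, the sheet normalisation and the triviality of $x_{\min}$ in Step~B2 via Perron--Frobenius positivity of $\qq_1$, and you construct the orthochronous Lorentz map carrying $(\aa_i)$ to $(\xx_i)$, whereas the paper's proof stops at $\xx_i\circ\xx_j=\aa_i\circ\aa_j$ and simply asserts isometry. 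The two caveats you flag (sign choice of $\qq_1$ and the rank-deficient case) are genuine but minor, and the paper's proof does not address them either.
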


For applications to real data, exact recovery is an atypical situation. However, \texttt{hydra} enjoys an optimality guarantee for strain minimization, expressed in the following theorem:

\begin{thm}[Optimal Approximation] \label{thm:optimal}
The rows $\xx_1, \dotsc, \xx_n$ of the matrix $X$ returned by \texttt{hydra}$(D,d,\kappa)$ are the globally optimal solutions of the strain minimization problem \eqref{eq:strain_min}.
Moreover, the first column of $X$ is strictly positive; equivalently, all $\xx_i$ are elements of positive Lorentz space $\Lor_+$. 
\end{thm}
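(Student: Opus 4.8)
The plan is to recast \eqref{eq:strain_min} as a constrained best-approximation problem for symmetric matrices, which is then solved by a signed variant of the Eckart--Young--Mirsky theorem. First I would introduce the signature matrix $G=\diag(1,-1,\dotsc,-1)\in\RR^{(d+1)\times(d+1)}$ and stack the vectors $\xx_i$ as the rows of an $n\times(d+1)$ matrix $X$, so that $[\xx_i\circ\xx_j]=XGX^\top$ and the objective in \eqref{eq:strain_min} becomes $\|A-XGX^\top\|_F^2$, where $A=[\cosh(\sqrt{\kappa}\,d_{ij})]$ and $\|\cdot\|_F$ is the Frobenius norm. Splitting $X=[x^{(1)}\mid X']$ into its first column $x^{(1)}\in\RR^n$ and the remaining block $X'\in\RR^{n\times d}$ gives $XGX^\top=x^{(1)}(x^{(1)})^\top-X'(X')^\top$, a difference of a positive-semidefinite matrix of rank $\le1$ and one of rank $\le d$. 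By a rank argument (Weyl's inequalities) any such matrix has at most one positive and at most $d$ negative eigenvalues, and conversely every symmetric matrix with this inertia arises as some $XGX^\top$ (read off $X$ from a spectral decomposition, placing the nonnegative eigendirection in the first column, the negative ones in the other columns, and padding with zeros). Hence \eqref{eq:strain_min} is equivalent to minimizing $\|A-M\|_F^2$ over the set $\mathcal M$ of symmetric $n\times n$ matrices with at most one positive and at most $d$ negative eigenvalues.

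Second, I would prove a matching lower bound. Let $\lambda_1\ge\dotsm\ge\lambda_n$ be the eigenvalues of $A$ (as in Step A1) and $\mu_1\ge\dotsm\ge\mu_n$ those of a candidate $M\in\mathcal M$. Von Neumann's trace inequality for symmetric matrices (equivalently, Hoffman--Wielandt) gives $\operatorname{tr}(AM)\le\sum_i\lambda_i\mu_i$, whence
\[\|A-M\|_F^2=\sum_i\lambda_i^2-2\operatorname{tr}(AM)+\sum_i\mu_i^2\ \ge\ \sum_{i=1}^n(\lambda_i-\mu_i)^2.\]
Membership $M\in\mathcal M$ forces $\mu_i=0$ for $2\le i\le n-d$ and $\mu_i\le0$ for $n-d+1\le i\le n$. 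Minimizing the right-hand side termwise under these restrictions yields $\mu_1=\lambda_1$, $\mu_i=\min(\lambda_i,0)$ for $i\ge n-d+1$, and the value $\sum_{i=2}^{n-d}\lambda_i^2+\sum_{i=n-d+1}^n(\lambda_i^+)^2$ (here one uses $\lambda_1>0$, which follows from $\operatorname{tr}(A)=n>0$ since $D$ has zero diagonal, or from Perron--Frobenius below, so that $\mu_1=\lambda_1$ is admissible). It then remains to observe that \texttt{hydra} attains this bound: by \eqref{eq:X_hydra} the returned $X$ satisfies $XGX^\top=\lambda_1\qq_1\qq_1^\top-\sum_{i=n-d+1}^n(-\lambda_i)^+\qq_i\qq_i^\top$, so
\[A-XGX^\top=\sum_{i=2}^{n-d}\lambda_i\,\qq_i\qq_i^\top+\sum_{i=n-d+1}^n\lambda_i^+\,\qq_i\qq_i^\top,\]
which is diagonal in the orthonormal eigenbasis $\{\qq_i\}$ and hence has squared Frobenius norm exactly $\sum_{i=2}^{n-d}\lambda_i^2+\sum_{i=n-d+1}^n(\lambda_i^+)^2$. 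Therefore the rows of $X$ are global minimizers of \eqref{eq:strain_min}.

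For the positivity claim I would invoke Perron--Frobenius: every entry of $A$ equals $\cosh(\sqrt{\kappa}\,d_{ij})\ge1>0$, so $A$ is a symmetric matrix with strictly positive entries; its top eigenvalue $\lambda_1$ is simple and strictly positive, and the corresponding eigenvector can be chosen with strictly positive entries. Taking $\qq_1$ to be this Perron eigenvector (a choice immaterial for $XGX^\top$, hence for the strain) makes the first column $\sqrt{\lambda_1}\,\qq_1$ of $X$ strictly positive, i.e.\ $x_{i1}>0$ for all $i$, which is precisely the assertion that every $\xx_i$ lies in $\Lor_+$.

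The hard part will be the matrix-approximation core: pinning down the feasible set $\mathcal M$ via the Weyl/rank argument and — more delicately — confirming that the von Neumann/Hoffman--Wielandt lower bound stays tight once the inertia constraints are imposed, i.e.\ that a minimizing $M$ may still be chosen to share the eigenvectors of $A$. This is exactly what makes \texttt{hydra}'s rule ``keep the top eigenpair, and keep the $d$ bottom eigenpairs whenever they are negative'' optimal. The minor points — the degenerate case $d=n-1$ (empty middle block), possible eigenvalue multiplicities of $A$, and the harmless sign ambiguity of the eigendecomposition — are routine and would be dispatched in passing.
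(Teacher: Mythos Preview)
Your proposal is correct and follows essentially the same route as the paper's proof: characterize the feasible set $\{XJX^\top\}$ by the inertia constraint ``at most one positive and at most $d$ negative eigenvalues'' (the paper packages this as Lemma~A.1 via Sylvester's law, you use Weyl's inequalities and a direct spectral construction), then apply the Hoffman--Wielandt/von Neumann bound $\|A-M\|_F^2\ge\sum_i(\lambda_i-\mu_i)^2$, minimize termwise under the inertia constraint, verify that \texttt{hydra}'s $X$ attains the bound, and finish with Perron--Frobenius for the positivity of the first column. The only cosmetic difference is that you state the inertia as ``at most one positive'' whereas the paper's lemma says ``exactly one''; your version is in fact the correct description of the unconstrained feasible set (the paper's ``exactly one'' uses entrywise positivity of $G$, which is not assumed for a generic $B=XJX^\top$), but this distinction is immaterial to the lower-bound argument.
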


\subsection{Practical guidelines and extensions}\label{sec:practical}
While the result of \texttt{hydra} satisfies the theoretical optimality guarantees in Theorem~\ref{thm:exact} and \ref{thm:optimal}, it can still be advantageous to adjust the results in order to improve the attractiveness of visualization or the embedding quality in terms of stress \eqref{eq:stress} (as opposed to strain, which is globally minimal). The so-called \textit{equiangular adjustment} has been introduced in \cite{muscoloni2017machine} and can be applied to two-dimensional hyperbolic embeddings. Here we propose a slight modification, which allows to interpolate smoothly between no adjustment and full equiangular adjsutment.
\begin{description}
\item[Equiangular adjustment] Let $\lambda \in [0,1]$ be the adjustment parameter and define $\mathrm{ark}(\theta_i)$ as the \emph{angular rank} of $\xx_i$, i.e. when the embedded points are ordered by increasing angular coordinate $\theta$, then $\mathrm{ark}(\theta_i)$ is defined as the rank (from $1$ to $n$) of $\xx_i$ in this list. The adjusted angular coordinate is then set to
\[\theta'_i := \lambda \theta_i + (1-\lambda) (\mathrm{ark}(\theta_i) -1) \frac{2\pi}{n}, \qquad i=1, \dotsc, n.\]
If $\lambda = 0$, no adjustment takes place. If $\lambda = 1$ then the angles $\theta'_i$ are regularly spaced (`equiangular') and only the ordering given by $\theta_i$ is retained.\footnote{This is the equiangular adjustment as proposed in \cite{muscoloni2017machine}.} Values of $\lambda \in (0,1)$ interpolate between these two extremes. We propose a values of $\lambda = 1/2$, which typically leads to improvements in both visual appeal and stress value of the embedding; see also method \texttt{hydra-equi} in Figure~\ref{fig:comp_method}.
\item[hydra+] If minimization of stress is the ultimate objective and strain is used only as a proxy, the result of \texttt{hydra} can be used as an initial condition for a direct minimization of the stress functional \eqref{eq:stress}. This can be seen as a chaining of \texttt{hydra} and \texttt{HyPy/Rigel} \cite{zhao2011fast, chowdhary2017improved}, where \texttt{hydra} substitutes the random initial condition of \texttt{HyPy/Rigel}. For the minimization of stress, efficient quasi-Newton routines, such as LBFGS \cite{zhu1997algorithm} can be used and supplied with the explicit gradient of stress, given in \cite[Eqs.~(3.1),(3.2)]{chowdhary2017improved}.
\end{description}
In terms of efficiency, the following simple improvement can be made to \texttt{hydra}:
\begin{description}
\item[Reduced Eigendecomposition] The numerically dominating part of \texttt{hydra} is the Eigendecomposition in \eqref{eq:Eigen}. Note, however, that in \eqref{eq:X_hydra} only the single first and the last $d$ Eigenvalues and Eigenvectors of the matrix $A$ are needed. There are efficient numerical routines (see e.g. \cite{lehoucq1998arpack}) to perform such a reduced Eigendecomposition without computing the full Eigendecomposition of $A$. These routines substantially improve efficiency if $n \gg d$ and are used in our implementation of \texttt{hydra}. 
\end{description}
Using the reduced Eigendecomposition, we expect the time complexity of \texttt{hydra} to be $\mathcal{O}(n^\alpha)$ with $\alpha$ slightly above, but close, to $2$, cf.~\cite[Ch.~55]{hogben2006handbook}. For \texttt{hydra+}, the time complexity is harder to estimate, since it is based on iterative minimization of a non-convex objective function. In a single step of LBFGS both the stress functional and its gradient have to be evaluated at a complexity of $\mathcal{O}(n^2)$. Depending on the number of steps to convergence, we thus also expect a complexity of $\mathcal{O}(n^\alpha)$, with $\alpha$ strictly larger than $2$. Empirical estimates of $\alpha$ are given in Section~\ref{sec:real} below.

\subsection{Remarks on strain-minimizing graph embeddings}

In the seminal paper \cite{papadopoulos2012popularity} it has been argued that the inherent negative curvature in hyperbolic geometry resolves the trade-off between the conflicting attractive forces of popularity and similarity in network growth models. For this reason \cite{papadopoulos2012popularity} have proposed to interpret the radial coordinate $r$ in the Poincar\'e disc as dimension of `popularity' and the angular coordinate $\theta$ as dimension of `similarity'.
Interestingly, the strain minimization problem \eqref{eq:strain_min} and its solution by \texttt{hydra} gives additional mathematical support for this interpretation. More precisely, revisiting Algorithm~\ref{algo:hydra} in the graph embedding context, we observe that: 
\begin{itemize}[leftmargin=*]
\item The \textbf{radial coordinates} $r_i$ are determined only from the Perron-Frobenius Eigenvector\footnote{The Perron-Frobenius Eigenvector is the Eigenvector associated to the largest Eigenvalue of a positive matrix (i.e. a matrix consisting only of positive entries) and is itself a positive vector, cf. \cite[Ch.~10]{hogben2006handbook}.} $\qq_1$ of the matrix $A$. This provides a remarkable connection to the \emph{Eigenvector centralities} (corresponding to the popularity dimension) of the nodes $v_i$, which are determined from the Perron-Frobenius Eigenvector of their \emph{adjacency matrix}. 
\item The \textbf{directional coordinates} $\uu_i$ are determined only through the Eigenvectors $\qq_{n-d+1}, \dotsc, \qq_n$ (and corresponding Eigenvalues) at the \emph{low end} of the spectrum of $A$. This provides a remarkable connection to Cheeger's inequality (cf. \cite[Ch.~9]{chung2006complex}), which shows that the low end of the spectrum of the graph \emph{Laplacian matrix} encodes the separability of the graph into sparsely connected `communities' (corresponding to the similarity dimension).
\end{itemize}
We remark that while the connections described above are a first step towards a mathematization of the popularity-similarity paradigm in hyperbolic network geometry, the matrix $A = [\cosh\left(\sqrt{\kappa} d_{ij}\right)]$ is in general neither identical to the adjacency nor to the Laplacian matrix of a given graph, and thus further research into the rigorous mathematical underpinning of these connections is warranted. 

\section{Numerical Results}\label{sec:real}
\subsection{Methods and Data}
In our numerical experiments, we evaluate different variants of \texttt{hydra} and compare them to existing hyperbolic embedding methods, using stress as performance criterion. We focus on small to medium sized networks (see Table~\ref{table:networks}), for which it is still feasible to compute the full distance matrix as input to our methods. Edge weights (when available) were discarded, i.e., all networks were treated as unweighted undirected graphs. This network data was used as input for the following methods:
\begin{description}
\item[hydra] The \texttt{hydra} method (without equi-angular adjustment) as described in Algorithm~\ref{algo:hydra}
\item[hydra-equi] The \texttt{hydra} method with equiangular adjustment $\lambda = 0.5$, as described in Section~\ref{sec:practical}
\item[hydra+] The \texttt{hydra+} method as described in Section~\ref{sec:practical} and using the result of \texttt{hydra-equi} as initial condition.
\item[HyPy/Rigel]  The \texttt{HyPy} algorithm from \cite{chowdhary2017improved}, which is based on \texttt{Rigel} from \cite{zhao2011fast}. Both methods are based on direct minimization of the stress functional \eqref{eq:stress}. \textit{Landmark selection}, as proposed in \cite{chowdhary2017improved} was not implemented, since it serves to reduce runtime and memory use for large networks, but is not expected to improve embedding results. As in \cite{chowdhary2017improved}, the initial condition for minimization was chosen at random and we repeated the embedding $100$ times.
\item[CE-LE] The coalescent embedding (CE) using Laplacian Eigenmapping (LE) as dimension-reduction method, full equiangular adjustment and repulsion-attraction (RA) pre-weighting; see \cite{muscoloni2017machine} for details. Among the methods developed in \cite{muscoloni2017machine}, this was the best performing method to invert the PSO generating model of \cite{papadopoulos2015network} for hyperbolic networks.
\end{description}
For the methods \texttt{hydra}, \texttt{hydra-equi}, \texttt{hydra+} and \texttt{HyPy/Rigel} we used our own implementations in \texttt{R}, which are available in the \texttt{R}-package \texttt{hydra}, \cite{keller-ressel2019hydra}. For \texttt{CE-LE} we used the MATLAB implementation of the methods of \cite{muscoloni2017machine} available from github.\footnote{\url{https://github.com/biomedical-cybernetics/coalescent_embedding}} The stress-optimization in \texttt{hydra+} and \texttt{HyPy/Rigel} was performed using the LBFGS method (see \cite{zhu1997algorithm}) as implemented in the \texttt{R}-function \texttt{optim} and using the analytic form of the gradient of the stress functional \eqref{eq:stress} from \cite{chowdhary2017improved}. Note that all methods except \texttt{CE-LE} use the shortest-path matrix as input dissimilarities; \texttt{CE-LE} uses repulsion-attraction (RA) weights as input dissimilarities, see \cite{muscoloni2017machine}. For all methods hyperbolic curvature was fixed to $-\kappa = -1$ and we embed into dimension $d = 2$. 

\begin{table}[htpb]
\begin{footnotesize}
\begin{tabular}{@{}lp{0.5\textwidth}ll@{}}
\toprule
Network&Description&Source&\#\,Nodes\\
\midrule
karate & Social interaction network (`Zachary's karate club network') from \cite{zachary1977information}&\texttt{igraphdata} \cite{csardi2015igraphdata}&$34$\\
UKfaculty & Personal friendship network of a UK university faculty from \cite{nepusz2008fuzzy} & \texttt{igraphdata} \cite{csardi2015igraphdata} & $81$\\
opsahl & One-node projection of message Exchange Network from \cite{opsahl2013triadic}; two isolated nodes have been removed&\url{toreopsahl.com}&$897$\\
facebook & Facebook social circles network from \cite{leskovec2012learning}; combined edge sets&\url{snap.stanford.edu}&$4039$\\
collaboration & Co-authorship network from ArXiv submissions to category Hep-Ph (High Energy Physics); largest connected component. From \cite{leskovec2007graph}& \url{snap.stanford.edu} &$8638$\\
oregon & Autonomous systems peering information inferred from route-views in Oregon on March 26, 2001. From \cite{leskovec2005graphs}& \url{snap.stanford.edu} &$11174$\\
\bottomrule
\end{tabular}
\vspace{0.5em}
\caption{Networks used for numerical experiments}\label{table:networks}
\end{footnotesize}
\end{table}

\subsection{Results and Discussion}
Results on embedding quality (measured by stress) for all networks and methods (except \texttt{CE-LE}) are shown in Figure~\ref{fig:comp_method}. Note that stress values are normalized by using the average result of \texttt{HyPy/Rigel} as a reference. This facilitates the comparison of results between different networks. As \texttt{HyPy/Rigel} depends on randomized initial conditions we indicate the $5\%$- and $95\%$-quantiles (over 100 runs) in addition to its average result. The stress-values of the embeddings produced by \texttt{CE-LE} were substantially larger (by a factor from $13$ to $26$) than the reference method and we have therefore excluded this method from the plot and from the further analysis of computation times. \\
\begin{figure}[htbp]
{\centering
\includegraphics[width=1.15 \textwidth]{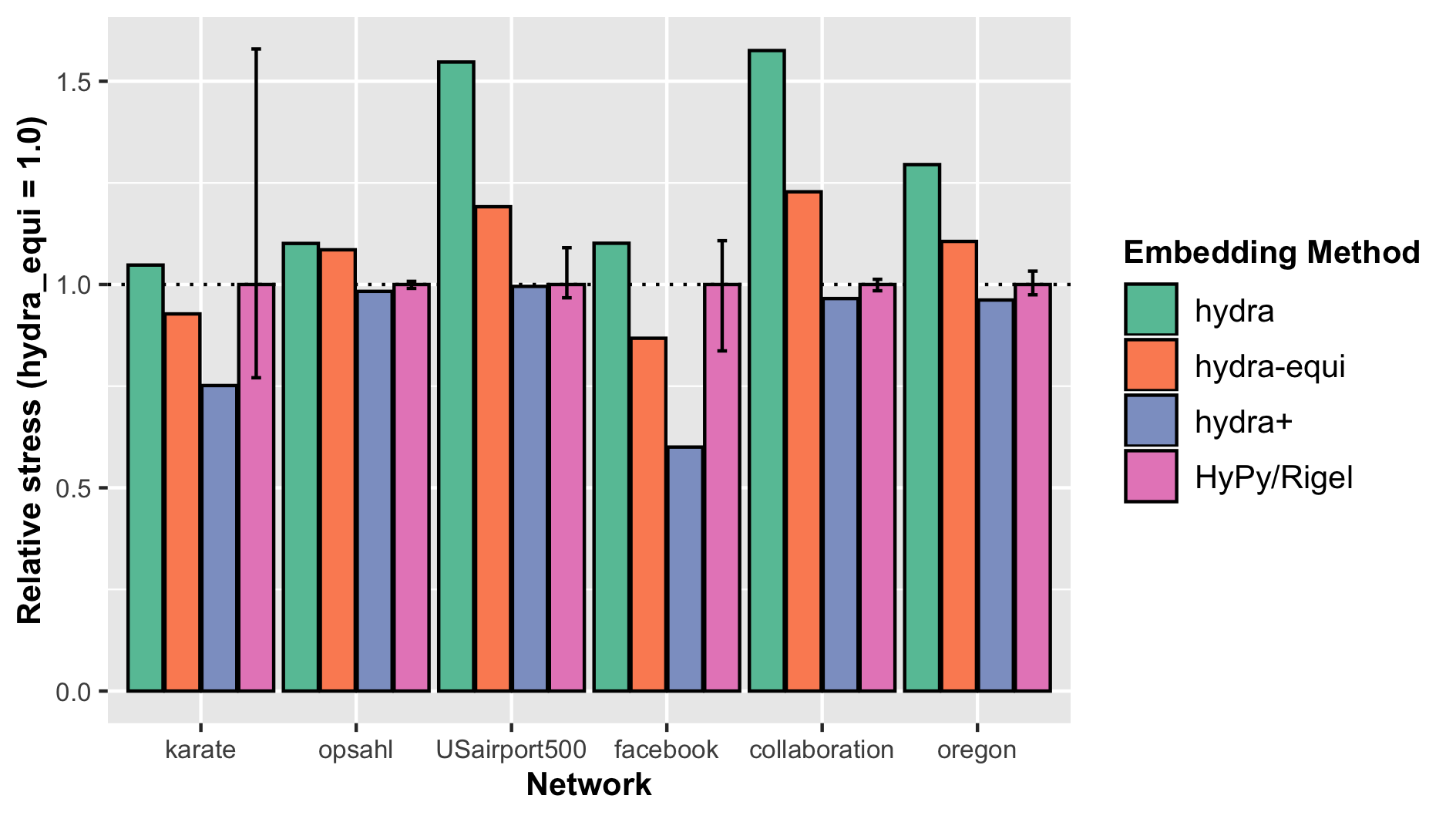}
}
\caption{\textbf{Embedding performance on real network data.} Embedding quality (measured by stress \eqref{eq:stress}, relative to the average result of \texttt{HyPy/Rigel}) of different hyperbolic embedding methods applied to the six networks listed in Table~\ref{table:networks}. For \texttt{HyPy/Rigel} a 5\%--95\% error bar is shown, corresponding to 100 runs with randomized initial condition.}
\label{fig:comp_method}
\end{figure}

Computation times of the different methods is shown in Figure~\ref{fig:comp_time} in doubly logarithmic coordinates. As should be expected, the methods split into two groups with computation times for \texttt{hydra} and \texttt{hydra-equi} being shorter than for \texttt{hydra+} and \texttt{HyPy/Rigel} by two orders of magnitude. The seemingly small gap between \texttt{hydra+} and \texttt{HyPy/Rigel} still corresponds to a difference of about 50\% in runtime. Based on the discussion in Section~\ref{sec:practical} we have added regression lines to estimate the exponent $\alpha$ in the conjectured complexity $\mathcal{O}(n^\alpha)$. To avoid clutter, regression lines are only shown for \texttt{hydra-equi} and \texttt{hydra+}; the estimates for all methods are $\alpha \approx 2.0$ for \texttt{hydra}, $\alpha \approx 2.1$ for \texttt{hydra-equi} and $\alpha \approx 2.3$ for both \texttt{hydra+} and \texttt{HyPy/Rigel}. As setup costs seem to dominate the computation times for the smallest network, we have excluded it from the regression analysis. 
\begin{figure}[htbp]
{\centering
\includegraphics[width=1.15 \textwidth]{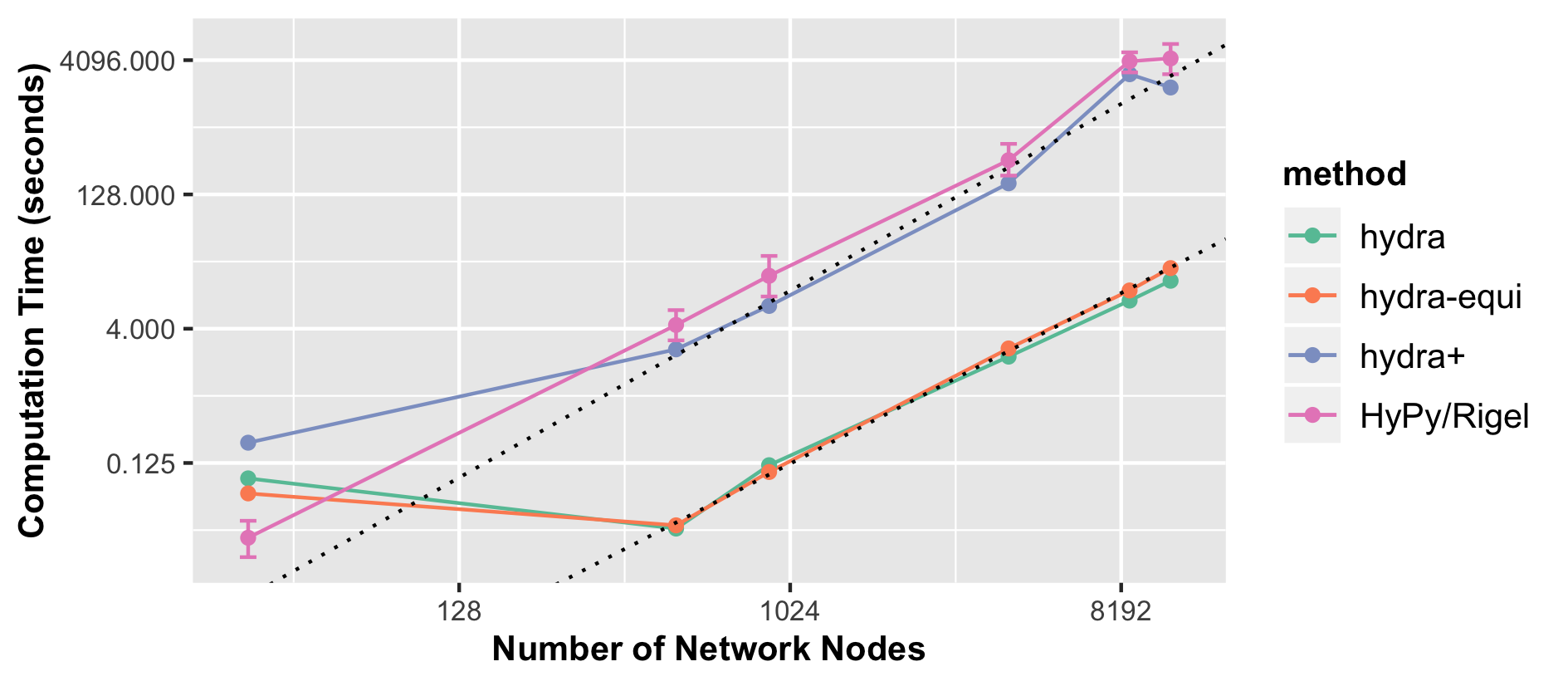}
}
\caption{\textbf{Computation time of embedding methods.} Computation time (in seconds) of different hyperbolic embedding methods, in relation to the number of nodes in the six networks listed in Table~\ref{table:networks}. Coordinate axes are doubly logarithmic. For \texttt{HyPy/Rigel}, average computation time and a 5\%--95\% error bar is shown, corresponding to 100 runs with randomized initial condition. For \texttt{hydra-equi} and \texttt{hydra+} dotted regression lines (excluding observations from the smallest network) are indicated.}
\label{fig:comp_time}
\end{figure}

Finally, an exemplary embedding result produced by the \texttt{hydra-equi} method for the \texttt{facebook} network is shown in Figure~\ref{fig:example}.  Nodes are placed into the Poincar\'e disc model of hyperbolic geometry according to their embedding coordinates (the full disc is indicated in grey) and a random subsample of links is drawn as hyperbolic geodesics. Visually, the embedding conforms well with the popularity-vs-similarity paradigm of \cite{papadopoulos2012popularity} for hyperbolic networks: Nodes with a function as hubs between communities or individuals (popularity dimension) are placed closer to the center of the hyperbolic disc. Communities are identified along the angular coordinate (similarity dimension) with the effective distance between communities indicated by angular separation.\\

Summarizing our numerical experiments, we conclude the following:
\begin{itemize}
\item In general, the strain-minimization performed by \texttt{hydra} seems to be a good proxy for stress-minimization, but is faster by a factor of 100 or more in comparison to stress-minimization from a random initial condition (\texttt{HyPy/Rigel}). Note that \texttt{hydra} also eliminated the uncertainty associated with the randomized nature of \texttt{HyPy/Rigel}, which can lead to large variations in embedding quality in some instances (e.g., the \texttt{facebook} network).
\item The simple equi-angular adjustment performed in \texttt{hydra-equi} consistently improves embedding quality in terms of stress at  negligible numerical costs. The returned embeddings outperform \texttt{HyPy/Rigel} for two networks (\texttt{karate}, \texttt{facebook}) and are competitive for all others, with a largest observed difference of 23\% in terms of stress.
\item Using the results of \texttt{hydra} as a starting value for stress-minimization, instead of a random initial condition, i.e., replacing \texttt{HyPy/Rigel} by \texttt{hydra+} reduces computation time by approx. 30\% - 50\% and leads to better (average) embedding quality in all cases. The reduction in stress is considerable in the facebook network, where stress is reduced by approx. 40\%.
\item The \texttt{CE-LE} method, based on the PSO network growth model of \cite{papadopoulos2015network}, is not competitive with the other methods in terms of embedding quality. This suggests that the structure of the real networks that we have considered deviates from the theoretical growth model (PSO-model) of \cite{papadopoulos2015network} upon which \texttt{CE-LE} is build. 
\end{itemize}

As a next step, we plan to make strain-minimizing hyperbolic embedding methods feasible for large and very large networks. For such networks the computational complexity of $\mathcal{O}(n^\alpha)$ (with $\alpha > 2$) of the proposed methods, but also of the graph distance calculation itself, are prohibitive. For this reason, heuristics such as the landmark heuristic of \cite{chowdhary2017improved} will have to be adapted to strain-minimizing embedding methods.

%

\begin{figure}
{\centering
\includegraphics[width=1.0 \textwidth]{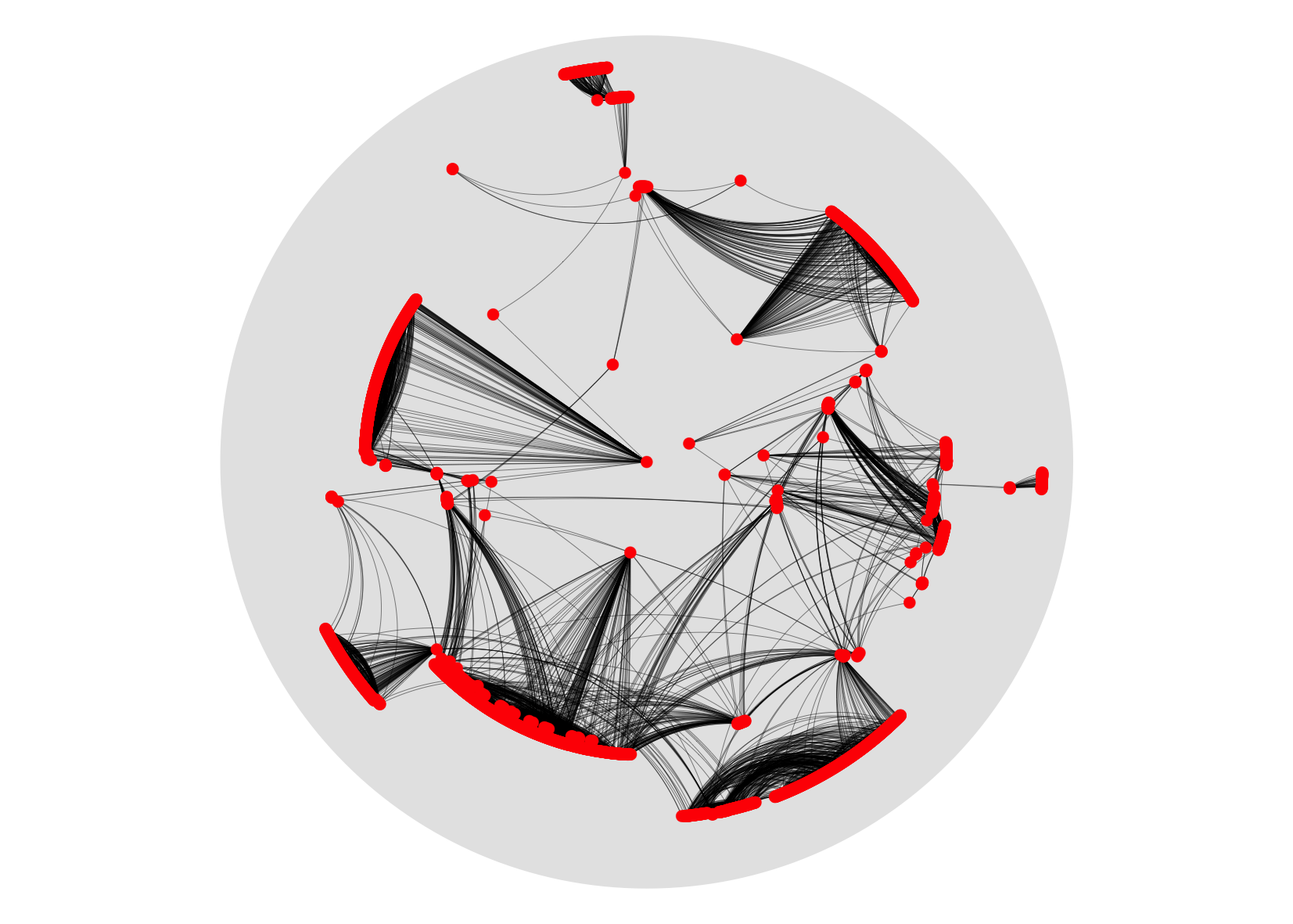}
}
\caption{\label{fig:facebook}\textbf{Embedding example.} The hyperbolic embedding of the \texttt{facebook} network produced by the method \texttt{hydra-equi}. All 4039 network nodes are shown as red dots. A random subsample of the 88234 total edges are also shown and drawn as hyperbolic geodesics in black. The edge subsample was produced by randomly sampling two incident edges from each network node, allowing for repetitions.}
\label{fig:example}\end{figure}

\begin{appendix}

\section{Theoretical Results}\label{app}
To prove the theoretical properties of the \texttt{hydra} method, it is convenient to reformulate the strain minimization problem \eqref{eq:strain_min} in matrix form. To this end, let $D = [d_{ij}]$ be the given dissimilarity matrix, set $A = [\cosh(\sqrt{\kappa}\,d_{ij})]$ and write 
\[X = \left(\xx_1, \dotsc, \xx_n\right)^\top \in \RR^{n \times (d+1)}\]
for the coordinate matrix of some points $\xx_1, \dotsc, \xx_n$ in $\RR^{d+1}$. Finally, let $J$ be the $(d+1) \times (d+1)$ diagonal matrix
\begin{equation}\label{eq:J}
J = \diag(1,-1, \dotsc, -1),
\end{equation}
cf. \cite[\S3.1]{ratcliffe2006foundations}. The strain minimization problem \eqref{eq:strain_min} can now be written in compact form as 
\begin{equation}\label{eq:strain_compact}
\min_{X \in \RR^{n \times (d+1)}}\norm{A - X^\top J X}_F^2,
\end{equation}
where $\norm{.}_F$ denotes the Frobenius norm. Imposing the constraint that all $\xx_i$ are elements of the hyperboloid  $\cH_d$ is equivalent to requiring that
\[\diag(X^\top J X) = (1, \dotsc, 1) \quad \text{and} \quad X\ee_1 > 0,\]
where $\ee_1$ is the first standard unit vector. In particular, the first condition guarantees $\xx_i \circ \xx_i = 1$, and the second one selects the upper sheet of the two-sheet hyperboloid thus described.

\subsection{Hyperbolic strain minimization and exact recovery} 
For a real symmetric matrix $A$, denote by $n_+(A)$ and $n_-(A)$ the number of positive and negative Eigenvalues of $A$. 
The following Lemma characterizes matrices that can be written as inner product matrices (`Gram matrices') with respect to the Lorentz product \eqref{eq:lorentz}:
\begin{lem}\label{lem:gram}
Let $G = [g_{ij}] \in \Rplus^{n \times n}$ be positive and symmetric, and let $d \le n-1$. The following are equivalent
\begin{enumerate}[label=\alph*)]
\item $G$ satisfies $n_+(G) = 1$ and $n_-(G) \le d$. 
\item $G$ is a `Lorentzian Gram matrix', i.e., there exist $\xx_1, \dotsc, \xx_n$ in $\Lor$, such that
\[g_{ij} =  \xx_i \circ \xx_j, \quad \forall\,i,j \in 1, \dotsc, n.\]
\item There exists $X \in \RR^{n \times (d+1)}$, such that 
\begin{equation}\label{eq:G_decomp}
G = X J X^\top,
\end{equation}
where $J$ is given by \eqref{eq:J}.
\end{enumerate}
In addition, 
\begin{itemize}
\item The first column of $X$ is positive if and only if $\xx_1, \dotsc, \xx_n$ are in the positive Lorentz space $\Lor_+$;
\item The points $\xx_1, \dotsc, \xx_n$ are in $\cH_d$ if and only if $\diag(G) = (1, \dotsc, 1)$ and the first column of $X$ is positive.
\end{itemize} 
\end{lem}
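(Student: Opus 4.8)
The plan is to run the cycle (b) $\Leftrightarrow$ (c) $\Rightarrow$ (a) $\Rightarrow$ (c), letting the spectral decomposition of $G$ carry the weight, and to read the two additional bullets straight off the identification of the points $\xx_i$ with the rows of $X$.

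First I would observe that (b) and (c) are the same statement in different clothing: if the rows of $X \in \RR^{n\times(d+1)}$ are $\xx_1^\top, \dotsc, \xx_n^\top$, then by \eqref{eq:lorentz} one has $\xx_i \circ \xx_j = \xx_i^\top J \xx_j$, so the system of identities $g_{ij} = \xx_i \circ \xx_j$ is exactly the entrywise form of $G = X J X^\top$. The two final bullets then come for free under this identification: the first column of $X$ is the vector of first Lorentz coordinates $(x_{11}, \dotsc, x_{n1})^\top$, so it is strictly positive precisely when every $\xx_i$ lies in $\Lor_+ = \set{\xx \in \Lor : x_1 > 0}$; and $\xx_i \in \cH_d$ means by definition $\xx_i \circ \xx_i = 1$ and $x_{i1} > 0$, i.e.\ $\diag(G) = (1,\dotsc,1)$ together with positivity of the first column.

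For (c) $\Rightarrow$ (a) I would argue from inertia. Splitting $J = \diag(1,-1,\dotsc,-1)$ into positive semidefinite pieces of ranks $1$ and $d$ exhibits $G = X J X^\top$ as a difference of positive semidefinite matrices of ranks at most $1$ and $d$; the standard congruence bound $n_\pm(X J X^\top) \le n_\pm(J)$ (obtained by restricting the form $v \mapsto (X^\top v)^\top J (X^\top v)$ to a maximal positive, resp.\ negative, subspace, on which $X^\top$ is forced to be injective) gives $n_+(G) \le 1$ and $n_-(G) \le d$. Since every entry of $G$ is strictly positive, $\mathbf{1}^\top G\, \mathbf{1} > 0$ (equivalently, Perron--Frobenius supplies a positive eigenvalue), so $n_+(G) \ge 1$ and hence $n_+(G) = 1$. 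For the converse (a) $\Rightarrow$ (c) I would diagonalize $G = Q\Lambda Q^\top$ with $\lambda_1 \ge \dotsm \ge \lambda_n$ and orthonormal $\qq_1, \dotsc, \qq_n$; the hypotheses $n_+(G) = 1$, $n_-(G) \le d$ and $d \le n-1$ force $\lambda_1 > 0$, $\lambda_{n-d+1}, \dotsc, \lambda_n \le 0$ and $\lambda_2 = \dotsm = \lambda_{n-d} = 0$ (a possibly empty range), and then taking $X$ exactly as in step A2 of Algorithm~\ref{algo:hydra}, see \eqref{eq:X_hydra}, the one-line computation
\[ X J X^\top = \lambda_1\, \qq_1\qq_1^\top - \sum_{i=n-d+1}^n (-\lambda_i)^+\, \qq_i\qq_i^\top = \sum_{i=1}^n \lambda_i\, \qq_i\qq_i^\top = G \]
finishes the argument (using $\lambda_i = -(-\lambda_i)^+$ for $i \ge n-d+1$ and the vanishing of the middle eigenvalues), and incidentally shows that the $X$ produced by \texttt{hydra} realizes the decomposition \eqref{eq:G_decomp}.

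I expect the only real obstacle to be the inertia bookkeeping in the two non-trivial implications: in (c) $\Rightarrow$ (a), upgrading $n_+(G) \le 1$ to $n_+(G) = 1$ from entrywise positivity; and in (a) $\Rightarrow$ (c), making sure the negative eigenvalues all sit in the last $d$ positions so that the truncation in \eqref{eq:X_hydra} discards only zero eigenvalues. These are precisely the steps where the hypotheses $G \in \Rplus^{n\times n}$ and $d \le n-1$ are used; everything else is unwinding definitions plus the congruence bound on inertia.
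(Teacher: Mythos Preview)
Your argument is correct and tracks the paper's proof closely for (b)\,$\Leftrightarrow$\,(c), for (c)\,$\Rightarrow$\,(a) (congruence bound on inertia plus Perron--Frobenius to force $n_+(G)\ge 1$), and for the two additional bullets.

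The one place you diverge is (a)\,$\Rightarrow$\,(c). The paper invokes Sylvester's law of inertia abstractly to assert the existence of a factorisation $G = \hat X \hat J \hat X^\top$ with the right signature, then deletes zero rows and columns (padding with zeros if $n_-(G) < d$). You instead work concretely with the orthogonal eigendecomposition $G = Q\Lambda Q^\top$ and take $X$ to be exactly the matrix \eqref{eq:X_hydra} produced by \texttt{hydra}, verifying $XJX^\top = G$ by direct computation. Both routes are short; yours is a shade more constructive and has the side benefit of exhibiting the algorithm's output as a witness to the decomposition, whereas the paper's version keeps the lemma independent of the algorithm. Your inertia bookkeeping (that $n_+(G)=1$ forces $\lambda_i \le 0$ for $i\ge 2$, whence $\lambda_2=\dotsm=\lambda_{n-d}=0$ and $(-\lambda_i)^+=-\lambda_i$ for $i\ge n-d+1$) is correct and is exactly where the hypothesis $d\le n-1$ enters.
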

\begin{proof}
The equivalence of (b) and (c) follows directly from the definition of the Lorentz product in \eqref{eq:lorentz}. Next, we show that (c) implies (a): From \cite[Ch.~10.3]{lax2007linear} it follows from \eqref{eq:G_decomp} that $n_+(G) \le n_+(J) = 1$ and $n_-(G) \le n_-(J) = d$. But $G$ is a positive matrix and Perron's theorem (cf. \cite[Ch.~16]{lax2007linear}) guarantees that its leading Eigenvalue is positive, i.e., $n_+(G) \ge 1$, and we conclude (a). To show that (a) implies (c), assume first that $n_-(G) = d$. By Sylvester's law of inertia, there exists a decomposition 
\[G = \hat{X} \hat{J} \hat{X}^\top, \quad \text{where} \quad \hat{J} =\diag\Big({+}1, 0, \dotsc, 0,\underbrace{-1, \dotsc, -1}_\text{$d$ times}\Big).\]
This decomposition can be reduced to \eqref{eq:G_decomp}, by simply dropping all rows and columns containing only zeroes from $\hat{J}$ and by also dropping the corresponding columns from $\hat{X}$. If $n_-(G) = d' < d$, the same procedure yields a decomposition with $X$ of dimension $n \times (d' + 1)$ and $J$ of dimension $(d'+1) \times (d'+1)$. Padding $X$ with zero columns and $J$'s diagonal with $-1$s, \eqref{eq:G_decomp} also follows in this case.\\
The additional statements follow directly from the following observations: The first column of $X$ contains exactly the first coordinate of all points $\xx_1, \dotsc, \xx_n$. If the first coordinate of a point $\xx$ is positive, it is an element of positive Lorentz space and vice versa. The diagonal of $G$ contains the values $\xx_i \circ \xx_i, i = 1, \dotsc, n$. If $\xx_i \circ \xx_i = 1$ and $\xx_i \in \Lor_+$ then $\xx_i$ is an element of the hyperboloid $\cH_d$ and vice versa.  
\end{proof}

\begin{proof}[Proof of Theorem~\ref{thm:optimal}]Let $A = [a_{ij}] = [\cosh(\sqrt{\kappa}\,d_{ij})]$ and let $B = [b_{ij}]$ be another symmetric matrix in $\RR^{n \times n}$. Let $(\lambda_i(A))_{i = 1, \dotsc, n}$ and $(\lambda_i(B))_{i = 1, \dotsc, n}$ be their  Eigenvalues in descending order, and denote by $\norm{.}_F$ the Frobenius norm. By a result of Wielandt-Hoffmann, cf. \cite[Ch.~10, Thm.~18]{lax2007linear},
\begin{equation}\label{eq:wielandt}
\sum_{i,j} (a_{ij} - b_{ij})^2 = \norm{A - B}_F^2 \ge \sum_{i}^n (\lambda_i(A) - \lambda_i(B))^2.
\end{equation}
Assume now that $B$ is a `Lorentzian Gram matrix'  with elements given by 
\[b_{ij} = \bb_i \circ \bb_j, \qquad i,j=1, \dotsc, n\]
for some $\bb_1, \dotsc, \bb_n  \in \Lor$. By Lemma~\ref{lem:gram} this implies that $n_+(B) = 1$ and $n_-(B) \le d$. Hence all Eigenvalues of $B$ with index $2, \dotsc, n-d$ are zero, and we obtain
\begin{align*}
\sum_{i,j} (a_{ij} - \bb_i \circ \bb_j)^2 &= \norm{A - B}_F^2  \ge \notag \\
&\ge (\lambda_1(A) - \lambda_1(B))^2 + \sum_{i=2}^{n-d} \lambda_i(A)^2 + \sum_{i = n-d+1}^n (\lambda_i(A) - \lambda_i(B))^2.
\end{align*}
For the first summand on the right hand side we have the trivial lower bound $0$. In the last sum, all $\lambda_i(B)$ are negative or zero, and hence, for any $i = (n-d+1), \dotsc, n$, we can estimate
\[(\lambda_i(A) - \lambda_i(B))^2 \ge \begin{cases} 0 \quad &\text{if } \lambda_i(A) \le  0\\ \lambda_i(A)^2 \quad &\text{if }\lambda_i(A) > 0, \end{cases}\]
which is the same as $(\lambda_i(A)^+)^2$.
Together, we obtain that
\begin{equation}\label{eq:strain_bound}
\sum_{i,j} (a_{ij} - \bb_i \circ \bb_j)^2 \ge \sum_{i=2}^{n-d} \lambda_i(A)^2 + \sum_{i=n-d+1}^{n} (\lambda_i(A)^+)^2.
\end{equation}
Denote by $A = Q \Lambda_A Q^\top$ the Eigendecomposition of $A$ with $\Lambda_A = \diag(\lambda_1(A), \dotsc, \lambda_n(A))$. Let $X$ be the matrix returned by \texttt{hydra}($D$, $d$, $\kappa$) and $\xx_1, \dotsc, \xx_n$ the rows of $X$. By \eqref{eq:X_hydra} the associated Lorentzian Gram matrix $G = XJ X^\top$ has the Eigendecomposition $G = Q \Lambda_G Q^\top$ with
\[\Lambda_G = \diag(\lambda_1(A), 0, \dotsc, 0, (-\lambda_{n-d+1}(A))^+, \dotsc, (-\lambda_n(A))^+).\]
Using the unitary invariance of the Frobenius norm and the trivial identity $x - (-x)^+ = x^+$, we obtain
\begin{align}\label{eq:x_minimizes}
\sum_{i,j} (a_{ij} - \xx_i \circ \xx_j)^2 &= \norm{Q \Lambda_A Q^\top - Q \Lambda_G Q^\top}^2_F = \norm{\Lambda_A - \Lambda_G}_F^2 = \\ = \sum_{i=2}^{n-d} \lambda_i(A)^2 + \sum_{i=n-d+1}^n (\lambda_i(A)^+)^2.\notag
\end{align}
This shows that setting $\bb_i  := \xx_i$ for all $i \in 1, \dotsc, n$ achieves equality in \eqref{eq:strain_bound} and hence that the points $\xx_i$ minimize \eqref{eq:strain_compact}.
\end{proof}

\begin{proof}[Proof of Theorem~\ref{thm:exact}]
Let $D = [d_{ij}]$ be the hyperbolic distance matrix of $\aa_1, \dotsc, \aa_n$ in $\cH^d$. Then $A = [a_{ij}] = [\cosh(\sqrt{\kappa}d_{ij})]$ is the associated Lorentzian Gram matrix with elements
\[a_{ij} = \aa_i \circ \aa_j.\]
By Lemma~\ref{lem:gram} $A$ satisfies $n_+(A) = 1$ and $n_-(A) \le d$, i.e. the Eigenvalues of $A$ satsify $\lambda_i(A) = 0$ for $i=2, \dotsc, n-d$ and $\lambda_i(A) \le 0$ for $i=n-d+1, \dotsc, n$. Hence, it follows from \eqref{eq:x_minimizes} that $\sum_{i,j}(a_{ij} - \xx_i \circ \xx_j)^2 = 0$ or,  equivalently, that
\[\xx_i \circ \xx_j = \aa_i \circ \aa_j\]
for all $i, j \in 1, \dotsc, n$. Applying $\cosh(\sqrt{\kappa}\,\cdot)$ to both sides, we see that
\[\dist_H^\kappa(\xx_i, \xx_j) = \dist_H^\kappa(\aa_i ,\aa_j)\]
and hence that $(\xx_i)$ and $(\aa_i)$ are isometric. 
\end{proof}

\subsection{Comparison to classic multidimensional scaling}\label{app:mds}
In several aspects, the hydra method can be seen as the `hyperbolic analogue' of classic multidimensional scaling (MDS), cf. \cite{borg2005modern}, which is based on Euclidean geometry. Below, we summarize the classical MDS method and point out parallels to (and differences from) \texttt{hydra}. Classical MDS also takes a matrix $D = [\dist_{ij}] \in \Rplus^{n \times n}$ with zero diagonal as input. Using the centering matrix $C = I - \frac{1}{n}\bm{1} \in \RR^{n \times n}$,  where $\bm{1}$ denotes a matrix of ones of matching dimension,  the `doubly centered' matrix 
\begin{equation*}
A = - \frac{1}{2}C^\top D C \qquad \text{[compare \eqref{eq:Acosh}]}
\end{equation*}
is derived from $D$, and its Eigendecomposition
\[A = Q \Lambda Q^\top \qquad \text{[compare \eqref{eq:Eigen}]}\]
computed. Again, $\Lambda$ is the diagonal matrix of the Eigenvalues $\lambda_1 \ge \dotsm \ge \lambda_n$ and the columns of $Q$ are the Eigenvectors $\qq_1, \dotsc, \qq_n$. MDS then returns the (Euclidean) coordinate matrix
\[X = \left[\sqrt{\lambda_1}\,\qq_1 \quad \sqrt{\lambda_2}\,\qq_2 \quad \dotsm \quad \sqrt{\lambda_d}\,\qq_d \right],\qquad \text{[compare \eqref{eq:X_hydra}]}\]
whose rows $\xx_i$ are interpreted as points in Euclidean space $\RR^d$. This coordinate matrix $X$ solves the strain minimization problem
\[\min_{X \in \RR^{n \times d}} \norm{A  - X^\top X}_F^2, \qquad \text{[compare \eqref{eq:strain_compact}]}\]
cf. \cite[Ch.~12]{borg2005modern}. Moreover, if the input matrix $D$ is a matrix of \emph{squared} Euclidean distances, i.e., $\dist_{ij} = |\xx_i - \xx_j|^2$ then MDS recovers the points  $\xx_i$ exactly (up to Euclidean isometry). Note that $X^\top X$ appearing above is the Gram matrix of the points $\xx_1, \dotsc, \xx_n$, i.e. the matrix of their scalar products $\xx_i^\top \xx_j$, whereas the matrix $X^\top J X$ in \eqref{eq:strain_compact} is the `Lorentzian Gram matrix' of the Lorentz products $\xx_i \circ \xx_j$.

\bibliographystyle{plain}
\bibliography{references}
\end{appendix}
\end{document}